\newtheorem{theorem}{Theorem}[section]
\newtheorem{definition}{Definition}[section]
\theoremstyle{proposition}
\newtheorem{proposition}{Proposition}[section]
\begin{document}
{\large Entropy and credit risk in highly correlated markets}\\[4mm]

\renewcommand{\thefootnote}{\fnsymbol{footnote}}
Sylvia Gottschalk\footnote[1]{Corresponding author. Middlesex University Business School, Accounting and Finance department, The Burroughs, Hendon NW4 4BT, s.gottschalk@mdx.ac.uk. We thank Sabba Hussain for exceptional research assistance.  All remaining errors are ours.}\\[10mm]
\renewcommand{\thefootnote}{\arabic{footnote}}

{\bf Abstract}: We compare two models of corporate default by calculating the Jeffreys-Kullback-Leibler divergence between their predicted default probabilities when asset correlations are either high or low. Our main results show that the divergence between the two models increases in highly correlated, volatile, and large markets, but that it is closer to zero in small markets, when asset correlations are low and firms are highly leveraged. These findings suggest that during periods of financial instability the single-and multi-factor models of corporate default will generate increasingly inconsistent predictions. \\[3mm]

{\bf Keywords}: Structural models of default risk, single and multi-factor models, asset correlation, entropy, Kullback-Leibler divergence, random correlation matrices, financial regulation.\\[3mm]


\section{Introduction}

The single-factor, single-firm structural model is the cornerstone of credit risk analysis, and has thus become the building block of international financial regulation in the Basel II and III Capital Adequacy Accords \citep{BIS2006,bis2011}. In this model pair-wise asset correlations can be ignored provided the respective assets account for an asymptotically small share of a credit portfolio \citep{gordy03}. Although there is evidence that a multi-factor model would predict default risk more accurately because it would take into account asset correlations, the computational tractability of the single-factor model offsets its oversimplifying assumptions \citep{pykhtin04,emmer05,tasche06}.\\

However, asset correlations have increased significantly in the past two decades, and even more so since the 2008 financial crisis \citep[amongst others]{aitsahalia15,sandoval12,pollet10,frank09,krishnan09,morana08,duffie09,das07,rangvid01,longin95}. Higher co-movements of asset returns have negative repercussions for portfolio diversification, in which risk reduction hinges exclusively on imperfectly correlated assets. Diversification issues also arise in the risk management of portfolios of corporate debt, and of collateralised debt obligations, in which asset correlation and default correlation have been shown to be linked \citep{das06,lopez02}. Credit risk analysis in industry and financial regulation relies almost exclusively on default probabilities calculated using \citet{merton74}'s single factor structural model, which defines default as the event of the asset value of a corporate firm being lower than the face value if its debt. \citet{merton74}'s model clearly ignores asset correlations, which are introduced in extensions developed by \citet{vasicek91,vasicek02,tasche06,zhou01,eljahel04}.   \\
 
This paper builds a multi-factor model where individual corporate asset value follows a geometric Brownian motion which is correlated with that of other firms. We derive the multi-factor probability distribution of asset value and run Monte Carlo simulations assuming either high or low asset correlations\footnote{in our model correlations and volatility are either both high or both low.}. We also run simulations of the single-factor probability of default, i.e., the default probability of a corporate whose asset value is not correlated with that of other firms. The correlation matrices used to simulate the multi-factor default probability are randomly generated. Under the high correlation scenario, the pair-wise correlation coefficients are in the range $[-0.99;-0.8] \cup [0.99,0.8]$. In a low correlation matrix all the pair-wise correlation coefficients are in $[-0.4;-0.1] \cup [0.4;0.1]$. For each pair of probabilities of default, we calculate its Jeffreys-Kullback-Leibler divergence. In the context of this paper, it quantifies the extent of the misrepresentation of the actual probability of default when the single factor rather than the multi-factor model is assumed to be the correct distribution generating asset values (or vice-versa) \citep{golan06,zellner02,kullback51,burbea82}. 
Entropy measures have been extensively used in finance. The maximum entropy principle (MEP) was used to estimate the probability distribution of the underlying asset of an option using the market option prices as data in \citet{buchen96,neri12}, and to estimate the price of stock and bond options in \citet{gulko99b,gulko02}. Securities derivatives in incomplete markets can also be priced by minimizing cross-entropy as shown in \citet{branger04}. \citet{borland02,borland04} build an option-pricing model derived from a non-Gaussian model of stock returns, which are assumed to evolve according to a
nonlinear Fokker-Planck equation which maximizes the Tsallis nonextensive entropy. Finally, \citet{maasoumi02} develop an entropy metric of dependence to show -amongst other results- that stock returns are serially dependent.\\

Our results show that the probability of default of an uncorrelated firm diverges significantly from that of a multi-correlated firm when asset correlations are high, but that it diverges much less when asset correlations are low. In some cases, the divergence is close to zero and the two models may be considered proxies for each other, namely, when asset correlations and market size are low, and the debt-to-asset-value ratio is high (150 to 200 percent). However, when market size, debt-to-value ratios \emph{and} correlations are high, the two models produce inconsistent default probabilities. Clearly, as asset correlations rise the single-factor model starts to misrepresent actual default probabilities. Further cases are analysed in Section \ref{results}.\\

Overall, we find that the discrepancy between the two models is exacerbated when firm's indebtedness is between 10 to 100 percent and in highly correlated markets. Our findings have implications for financial regulation. The Basel II and III Capital Adequacy Accords stipulate that capital provisions should be calculated in accordance with a single-factor, single-firm structural model \citep{BIS2006,bis2011}. Our paper suggests that in periods of financial instability, when asset volatility and correlations increase, one of the models may misreport default risk and thus lead to inadequate capital provisions. Our results are congruent with \citet{aitsahalia15,pollet10,krishnan09}. These papers find evidence that asset correlations have more predictive power than asset volatility. Further, \citet{das06} found that clustering of defaults occurs during times of high volatility because both default probabilities and correlation between defaults increase.\\ 

The paper proceeds as follows: Section \ref{mmerton} presents our main model and Section \ref{divergence} measures of divergence. Simulations of high/low correlation matrices are described in Section \ref{vine}. Section \ref{results} presents and discusses the results of simulations of our model. Section \ref{conc} concludes and precedes the Appendix. 

\section{Multi-factor model of corporate value\label{mmerton}}

Consider $m$ dependent Brownian motions $\bm{B_t}=(B_{1,t},...,B_{m,t})$ defined on a probability space $(\Omega,\mathcal{F},\mathbb{P} )$, where ${\cal{F}}(t)$ is the filtration associated with $\bm{B_t}$, and $\mathbb{P}$ a probability measure. The  market has $N$ firms with correlated asset values $\{V_{1,t},...,,V_{N,t}\}$, for $t\in [0,T]$. $V_{i,t}$ satisfies the differential equation 

\begin{equation}
dV_{i,t}= V_{i,t}\left(\mu_{i}dt+ \sigma_{i}dB_{i,t}\right)\label{vdep} 
\end{equation}

where $\mu_{i}$ is its drift and $\sigma_i$ its volatility. The Brownian motions $B_{j,t}$ are correlated, in the sense that $dB_{jt}dB_{kt}=\rho_{jk}$, for $j,k=1,...,m$, with correlation coefficient $\rho_{jk}= \frac{1}{\sigma_{j}\sigma_{l}}\sum_{l=1}^{m} \sigma_{jl}\sigma_{kl}$, for all $j,k=1,...,m$.\\

$B_{j,t}$, $j=1,...,m$ can be rewritten as a function of independent Brownian motions $W_{j,t}$. Let 

\begin{equation}
\sigma_{i} =\left[\sum_{j=1}^{m} \sigma_{ij}^{2}\right]^{1/2}\label{sigma1}
\qquad \textrm{for all i=1,...,N}
\end{equation}

and 

\begin{equation}
dB_{j,t}=\sum_{j=1}^{m} \frac{\sigma_{ij}}{\sigma_{i}}dW_{j,t}\label{db1}
\qquad \textrm{for all i=1,...,N}
\end{equation}

then, by substituting in (\ref{sigma1}) and (\ref{db1}) in (\ref{vdep}), (\ref{vdep}) becomes

\begin{equation}
dV_{i,t}= V_{i,t}\left(\mu_{i}dt+ \sum_{j=1}^{m} \sigma_{ij}dW_{j,t}\right)\label{gbm} 
\end{equation}

We show below that the volatility of $V_{i,t}$, $i=1,..,N$, depends on the pair-wise covariances between the $m$ stochastic processes $W_{j,t}$ driving the value of the other firms in the market.\\

\begin{proposition} 
\label{distribution}
The solution of (\ref{gbm}) is the set of values $\{V_{1,t},...,,V_{N,t}\}$ such that

\begin{equation}
V_{i,t}= V_{i,0}exp\left\{(\mu_{i}-\frac{1}{2}\sum_{j=1}^{m}\sigma_{ij}^2)t+ 
 \sum_{j=1}^{m}\sigma_{ij}W_{j,t} \right \}\qquad i=1,...,N\label{solsystem}
\end{equation}

where $exp\{.\}\equiv e^{(.)}$ is the exponential function. Moreover, $V_{i,t}$ is log-Normally distributed with mean

\begin{equation}
E[V_{i,t}]=V_{i,0}\exp\{(\mu_i) t\}
\end{equation} 

and variance 
\begin{equation}
Var[V_{i,t}]=V_{i,0}^2\exp\{2\mu_i t\}(\exp\left \{\sum_{j=1}^{m}\sigma_{ij}^{2}t \right\}-1)
\end{equation}
\end{proposition}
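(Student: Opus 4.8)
The plan is to solve the stochastic differential equation (\ref{gbm}) for each firm $i$ via Itô's lemma, then read off the distribution from the explicit solution. The key observation is that equation (\ref{gbm}) is a geometric Brownian motion driven by a \emph{linear combination} $\sum_{j=1}^{m}\sigma_{ij}W_{j,t}$ of the independent Brownian motions $W_{j,t}$. Because the $W_{j,t}$ are independent, this linear combination is itself a one-dimensional Brownian motion (up to a time rescaling), with variance parameter equal to $\sum_{j=1}^{m}\sigma_{ij}^{2}$. So although the problem is dressed up as multi-factor, each firm's dynamics collapse to a standard scalar geometric Brownian motion whose effective volatility is $\big(\sum_{j=1}^{m}\sigma_{ij}^{2}\big)^{1/2}$.

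Concretely, I would first apply Itô's lemma to $f(V_{i,t})=\log V_{i,t}$. Writing $Y_{i,t}=\log V_{i,t}$, the first-order term contributes $\mu_i\,dt+\sum_{j=1}^{m}\sigma_{ij}\,dW_{j,t}$, and the second-order term contributes $-\tfrac12 V_{i,t}^{-2}\,d\langle V_i\rangle_t$. The crucial step is computing the quadratic variation: since the $W_{j,t}$ are independent, $dW_{j,t}\,dW_{k,t}=\delta_{jk}\,dt$, so the cross terms vanish and $d\langle V_i\rangle_t = V_{i,t}^{2}\big(\sum_{j=1}^{m}\sigma_{ij}^{2}\big)dt$. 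This produces the Itô correction $-\tfrac12\sum_{j=1}^{m}\sigma_{ij}^{2}$ in the drift. Integrating $dY_{i,t}$ from $0$ to $t$ then gives exactly (\ref{solsystem}).

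For the distributional claim, I would note that $\sum_{j=1}^{m}\sigma_{ij}W_{j,t}$ is a sum of independent mean-zero Gaussian random variables and is therefore itself Gaussian, with mean $0$ and variance $\big(\sum_{j=1}^{m}\sigma_{ij}^{2}\big)t$. Hence the exponent in (\ref{solsystem}) is normally distributed, so $V_{i,t}$ is log-normal. The mean and variance then follow from the standard moment generating function of a normal law: for $Z\sim\mathcal{N}(a,b^2)$ one has $E[e^{Z}]=e^{a+b^2/2}$ and $\mathrm{Var}[e^{Z}]=e^{2a+b^2}(e^{b^2}-1)$. Substituting $a=\log V_{i,0}+(\mu_i-\tfrac12\sum_j\sigma_{ij}^{2})t$ and $b^2=\big(\sum_{j=1}^{m}\sigma_{ij}^{2}\big)t$, the $-\tfrac12\sum_j\sigma_{ij}^{2}t$ term inside $a$ cancels against the $+b^2/2$ in the mean formula, yielding $E[V_{i,t}]=V_{i,0}e^{\mu_i t}$, and the variance formula drops out identically.

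I do not anticipate a genuine obstacle here, as the result is the classical closed-form solution of geometric Brownian motion. The only point requiring care is the handling of the multi-factor noise: one must justify that the independence of the $W_{j,t}$ makes the cross-variation terms vanish, so that the effective scalar volatility is $\big(\sum_{j=1}^{m}\sigma_{ij}^{2}\big)^{1/2}$ rather than involving the pairwise correlations $\rho_{jk}$ directly. This is precisely where the change of basis in (\ref{sigma1})--(\ref{db1}) pays off, having already absorbed the correlation structure into the coefficients $\sigma_{ij}$.
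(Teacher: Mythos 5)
Your proposal is correct and follows essentially the same route as the paper: both arguments rest on the independence of the $W_{j,t}$ and the Gaussian moment generating function $E[e^{aW}]=e^{a^{2}/2}$, the only cosmetic difference being that you aggregate the exponent into a single normal variable with variance $\bigl(\sum_{j=1}^{m}\sigma_{ij}^{2}\bigr)t$ and apply the log-normal moment formulas, whereas the paper factors the expectation over $j$ and invokes the $1/2$-self-similarity of Brownian motion to reduce each $W_{j,t}$ to $t^{1/2}$ times a standard normal. You additionally supply the It\^{o} derivation of (\ref{solsystem}) itself, which the paper delegates to the references.
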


The proof can be found in Appendix A. \\

Default is defined as the event of the value of the $i$-th firm, $V_{i,T}$, being lower than the face value of its debt $D_i$ at a given time $t=T$. The probability of this event is

\begin{equation}
P\left (V_{i,T}\leq D_i\right)=P\left(V_{i,0}X_{i,T}\leq D_i\right)=
P\left(X_{i,T}\leq \ln(D_i/V_{i,0}) \right)\qquad i=1,...,N\label{pd1}
\end{equation}

where $X_{i,t}=\exp\{(\mu_{i}-\frac{1}{2}\sum_{j=1}^{m}\sigma_{ij}^2)t+\sum_{j=1}^{m}\sigma_{ij}W_{j,t}\}$ for i=1,...,N.

\begin{equation}
P\left (X_{i,T}\leq \ln(D_i/V_{i,0})\right)=N_m\left(\ln(D_i/V_{i,0})\right)\qquad i=1,...,N \label{probm}
\end{equation}

where $N_m(.)$ is the cumulative Normal distribution of $X_{i,t}$ when $m>1$.\\

The single-factor model is retrieved from (\ref{gbm}) and (\ref{solsystem}) for $m=1$, 

\begin{equation}
V_{i,t}=V_{i,0} exp\left\{\left(\mu_{i}-\frac{1}{2}(\sigma_{i})^{2}\right)t+\sigma_{i}W_{i}\right \}\qquad i=1,...,N\label{solsingle}
\end{equation}

and 

\begin{equation}
P\left (V_{i,T}\leq D_i\right)=N_s(\ln(D_i/V_{i,0})) \qquad i=1,...,N\label{probs}
\end{equation}

where $N_s(.)$ is the cumulative Normal distribution of $X_{i,t}$ when $m=1$.\\

The model presented above is closely related to the single-and multi-factor models used extensively in credit risk analysis \citep{tasche06,pykhtin04,vasicek02}. In these papers, pairwise correlations between two assets are not directly modelled. It is assumed instead that both assets are correlated to one (or more) underlying factor(s), and that the assets are independent from each other \emph{conditional} on the common factor(s). Their set-up can be retrieved from our model by assuming that each Brownian motion $B_{i,t}= \sqrt{\rho}Y+ \sqrt{1-\rho}\epsilon_i$, $i=1,..,m$, where $Y$ is the (single) common factor, and $\epsilon_i$ is a firm-specific error, rather than $dB_{jt}dB_{kt}=\rho_{jk}$, for $j,k=1,...,m$ as we do above. 

\section{Divergence measures and simulation of correlation matrices\label{divergence}}

We examine the discrepancy between (\ref{probs}) and (\ref{probm}) by measuring the Jeffreys-Kullback-Leibler divergence between these two probabilities. Let $f_1(x)$ and $f_2(x)$ be two probability distribution functions over $\mathbb{R}^m$, where $m$ is the number of dependent Brownian motions in Section \ref{mmerton} above.\\

The Jeffreys-Kullback-Leibler divergence measure between $f_1(x)$ and $f_2(x)$ is defined as

\begin{equation}
J(f_1,f_2)= I_{1,2}(f_1,f_2)+I_{2,1}(f_2,f_1) \label{jkl0}
\end{equation}

where 

\begin{equation}
I_{1,2}(f_1,f_2)=\int_{0}^{\infty} f_1(x) log \left(\frac{f_1(x)}{f_2(x)}\right)dx\label{cross_ent}
\end{equation}

is the Kullback-Leibler divergence or cross-entropy\footnote{(\ref{cross_ent}) reduces to the Shannon entropy when $f_2(x)$ is a uniform distribution, i.e. $-\int_{0}^{\infty} f_1(x) log(f_1(x))dx$.}. \citep{kullback51}. Although (\ref{cross_ent}) does satisfy $I_{1,2}(f_1,f_1) = 0$ and the positivity condition $I_{1,2}(f_1,f_2) > 0$ whenever $f_1\neq f_2$, it is not a true metric distance, because it is not symmetric and does not satisfy the triangle inequality \citep{ullah96}. However, it can be thought of as an ``entropy distance" between $f_1(x)$ and $f_2(x)$. It quantifies the loss of information occuring when considering $f_2(x)$ as the correct probability distribution when the true distribution is $f_1(x)$. If we were to use the Kullback-Leibler divergence measure, we would have to arbitrarily decide whether the multi-factor or the single-factor probability of default is the true probability of default. Owing to the asymmetry of the Kullback-Leibler divergence, if the multi-factor distribution $f_{m}$ is assumed to be the true distribution, then $I_{m,s}(f_{m},f_{s})=\int_{0}^{\infty} f_{m}(x) log \left(\frac{f_{m}(x)}{f_{s}(x)}\right)dx$, which differs from $I_{s,m}(f_{s},f_{s})=\int_{0}^{\infty} f_{s}(x) log \left(\frac{f_{s}(x)}{f_{m}(x)}\right)dx$.\\

In this paper, we restrict ourselves to measuring the divergence between the probability of default resulting from two different models of corporate default without making \emph{a priori} assumptions about the accuracy of any one model. Consequently, we opt for a symmetric extension of the Kullback-Leibler measure proposed in \citet{kullback51} and \citet{burbea82}, the Jeffreys-Kullback-Leibler divergence measure.\\
 
Substituting (\ref{cross_ent}) into (\ref{jkl0}), (\ref{jkl0}) becomes

\begin{equation}
J(f_1,f_2)=\int_{0}^{\infty} (f_1(x)-f_2(x)) log \left(\frac{f_1(x)}{f_2(x)}\right)dx\label{jeffreys}
\end{equation}

(\ref{jeffreys}) remains a pseudo-metric, since it violates the triangle inequality, but it does satisfy all the other properties of a metric \citep{kullback51,ullah96}.

\subsection{Simulating correlation matrices\label{vine}}

The most straightforward way to simulate a random correlation matrix consists of generating random data from a given distribution and then calculate their pair-wise correlations. However, the resulting matrix may not necessarily have the desired consistently high (or low) pairwise correlations. The simulated correlation structure should also be realistic. A deterministic matrix in which all off-diagonal entries are equal to 0.9 would fulfill the requirement of having a high correlation structure. It would nonetheless be a very poor proxy for the asset correlations found in most financial markets. In order to systematically generate \emph{realistic} positive definite matrices with given correlations, we adapt an algorithm created by \citet{hardin13}. In their paper, a ``noise" is added to a correlation matrix in such a way that the resulting matrix will have blocks of pair-wise correlations whose values are within a predetermined range.\\

More precisely, let $S=(S_{ij})_{i,j=1}^m$ be a $m \times m$ matrix generated by calculating the pairwise correlations of random data. From a Uniform probability distribution we draw a ``noise" $\delta_{ij} \in [-1,1]$ which is added to $S_{ij}$, $i\neq j$, subject to the restriction that the resulting entry $\hat{S}_{ij}=S_{ij}+\delta_{ij}$, $i\neq j$, should be between $[\rho_{min},\rho_{max}]$. In our ``high asset correlation" scenario, all the entries $\hat{S}_{ij}$ are within $[-0.99,-0.8]$ or within $[0.8,0.99]$. In the ``low asset correlation" alternative, all the entries $\hat{S}_{ij}$ are within $[-0.4,-0.1]$ or within $[0.1,0.4]$. \\
   
Our main modification to \citet{hardin13}'s algorithm is that we allow $\delta_{ij}$, $\rho_{min}$ and $\rho_{max}$ to be negative, since financial asset correlations can be either positive or negative. In their paper correlations are always positive.

\section{Results and discussion\label{results}}

We ran 2000 Monte Carlo simulations of the multi-factor and single factor probabilities of default\footnote{All our simulations were run in R \citep{r16}. The algorithm that generates random correlation matrices was adapted from \citet{hardin13} and \citet{joe06}.}, for a given number of firms $N=$10,50,90,100,500, and 1000\footnote{For the sake of simplicity, we set $N=m$ in the simulations.}, and a given level of debt leverage $ln(D_i/V_i)$=0.1,.., 2, in steps of 0.1. For each simulation, we randomly estimate a high correlation $N\times N$ matrix. This procedure is repeated for low correlation matrices. The results of each set of simulations are 2000 values for the Jeffreys-Kullback-Leibler divergence (\ref{jeffreys}), from which we calculate the averages shown in Figure 1 and Table \ref{tab:J}. Our main results are summarised as follows. \\

\emph{The divergence between the multi-factor and the single-factor probability of default increases with asset correlation.}\\

For $N$=10 to 100 firms, the graph of the low correlation divergence is significantly below that of the high correlation divergence. For $N=10$ firms, leverage equal to 10\%, and low correlation, the average divergence is $\bar{J}^L$=0.3400. For the same number of firms and leverage, but high correlation, $\bar{J}^H$=0.5669. The divergences for a leverage of 200\% -all else equal- are $\bar{J}^L$= 0.0575 and $\bar{J}^H$=0.1377. Table 1 shows analogous results for other levels of leverage and market sizes. This suggests that in periods of high correlation, e.g., financial crises, the multi-factor default probability will differ significantly from that of the single-factor model. One of the two models may underestimate the true default probability. For larger market sizes, $N$=500 and 1000, the discrepancy between the high and low correlation divergences is reduced. This issue is addressed in detail below. \\

\emph{The divergence between the multi-factor and the single-factor probability of default is inversely related to leverage.} \\

This result indicates that the more indebted the firm, the less discrepancy between the information given by the two models. For $N$=50 and low correlation, $\bar{J}^L$=0.5834 if $ln(D_i/V_i)$=10\%,  whilst $\bar{J}^L$=0.1018 for $ln(D_i/V_i)$=200\%, all else equal. Clearly, in the case of highly indebted firms, default probability tends to be similar under either model,. Figure 1 suggests that irrespective of asset correlations, the divergence between the two probabilities tends to zero when debt leverage increases. The implication is that for lower levels of leverage, the two probabilities of default will diverge considerably, and will be influenced by asset correlations. For instance, it can be seen in Table \ref{tab:J} that for a leverage equal to 50\%, high correlation and $N$=50, the average divergence is $\bar{J}^H$=0.3498, against $\bar{J}^L$=0.2636. At that level of leverage, the multi-factor and single factor probabilities of default will give contradictory signals.\\

\begin{figure}
\begin{center}
\caption{Divergence by market size (N=10 to 1000 firms)\label{c1}}
\includegraphics[totalheight=15cm,width=12cm]
{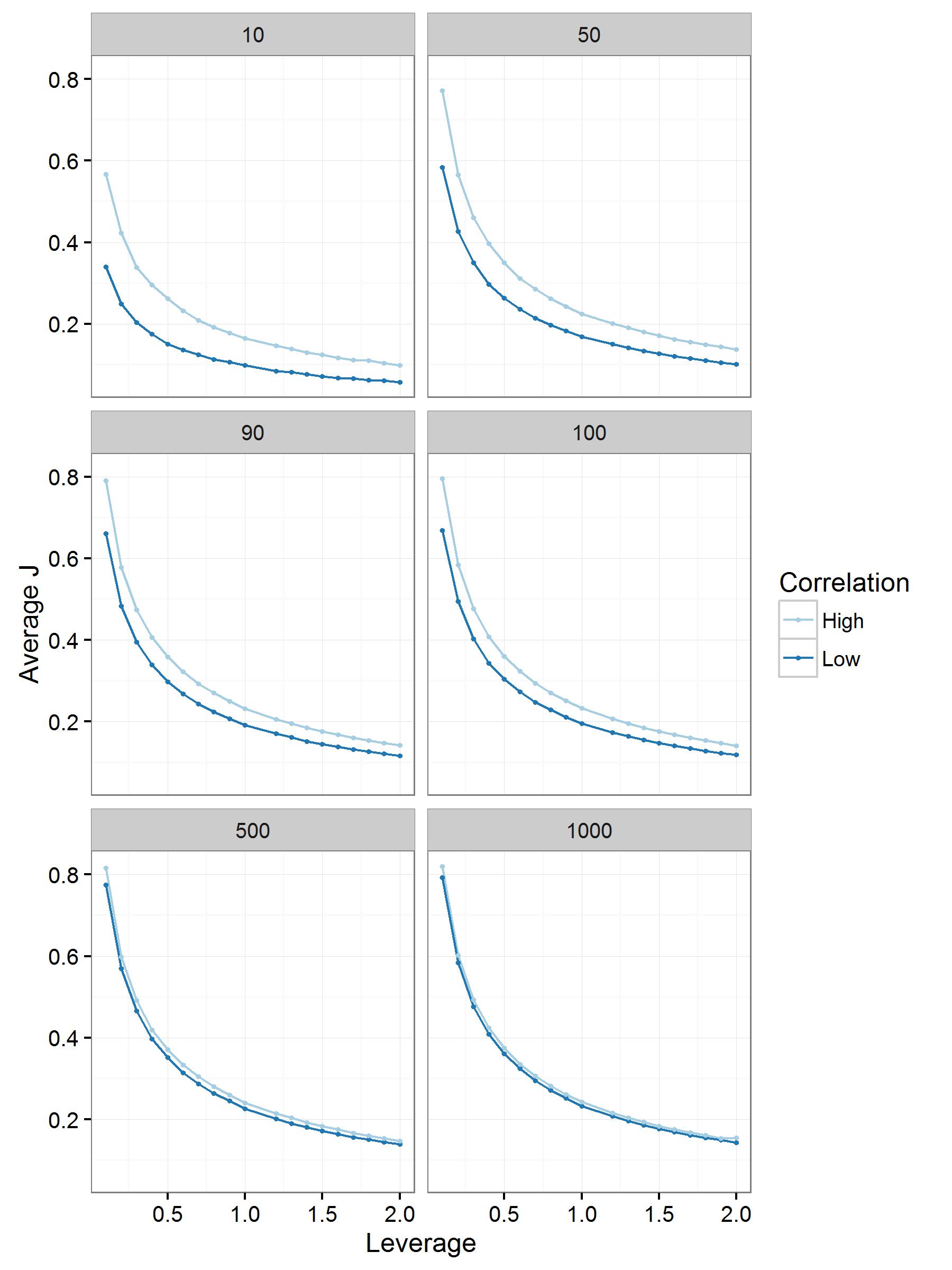}
\end{center}
\end{figure}

\begin{table}[ht]
  \begin{center}
  \caption{Average values of Jeffreys-Kullback-Leibler divergence }
    \begin{tabular}{ccccccccc}
    \hline
  Leverage $\downarrow$  & \multicolumn{4}{c}{Low correlation ($\bar{J}^L$)}  & \multicolumn{4}{c}{High correlation ($\bar{J}^{H}$)} \\  \hline
   &N=10    &50    & 100   & 1000  & 10    & 50    & 100   & 1000 \\
    \multicolumn{1}{c}{\multirow{2}[0]{*}{0.1}} & 0.3400 & 0.5834 & 0.6683 & 0.7926 & 0.5669 & 0.7722 & 0.7952 & 0.8201 \\
    \multicolumn{1}{c}{} & (0.1268) & (0.0813) & (0.0658) & (0.0241) & (0.1982) & (0.1081) & (0.0763) & (0.0248) \\
    \multicolumn{1}{c}{\multirow{2}[0]{*}{0.5}} & 0.1504 & 0.2636 & 0.3046 & 0.3610 & 0.2618 & 0.3498 & 0.3607 & 0.3751 \\
    \multicolumn{1}{c}{} & (0.0567) & (0.0382) & (0.0308) & (0.0115) & (0.0931) & (0.0488) & (0.0358) & (0.0118) \\
    \multicolumn{1}{c}{\multirow{2}[0]{*}{1}} & 0.0980 & 0.1684 & 0.1954 & 0.2328 & 0.1651 & 0.2243 & 0.2330 & 0.2425 \\
    \multicolumn{1}{c}{} & (0.0364) & (0.0246) & (0.0201) & (0.0074) & (0.0608) & (0.0318) & (0.0229) & (0.0073) \\
    \multicolumn{1}{c}{\multirow{2}[0]{*}{1.5}} & 0.0710& 0.1266 & 0.1475 & 0.1772 & 0.1240 & 0.1700 & 0.1764 & 0.1833 \\
    \multicolumn{1}{c}{} & (0.0275) & (0.0186) & (0.0140) & (0.0056) & (0.0448) & (0.0239) & (0.0172) & (0.0056) \\
    \multicolumn{1}{c}{\multirow{2}[0]{*}{2}} & 0.0575 & 0.1018 & 0.1185 & 0.1425 & 0.0992 & 0.1377 & 0.1415 & 0.1547 \\
    \multicolumn{1}{c}{} & (0.0215) & (0.0147) & (0.0119) & (0.0040) & (0.0365) & (0.0192) & (0.0134) & (0.0052) \\  \hline
    \end{tabular}
		\end{center}
		{\footnotesize 2000 simulations of correlation matrices and probability of default. Standard errors in brackets.}   
  \label{tab:J}%

 \begin{center}
  \caption{Welch test of equality of $\bar{J}^L$ and $\bar{J}^H$ in Table 1}
    \begin{tabular}{ccccccccc}
    \hline
 Leverage $\downarrow$& \multicolumn{2}{c}{N=10} & \multicolumn{2}{c}{50} & \multicolumn{2}{c}{100} & \multicolumn{2}{c}{1000} \\
    \hline
          & t     & p-value & t     & p-value & t     & p-value & t     & p-value \\
    0.1   & -65.414 & 2.20E-16 & -62.447 & 2.20E-16 & -56.398 & 2.20E-16 & -11.241 & 2.20E-16 \\
    0.5   & -45.704 & 2.20E-16 & -62.228 & 2.20E-16 & -53.087 & 2.20E-16 &  -14.13  & 0.00 \\
    1     & -42.279 & 2.20E-16 & -62.182 & 2.20E-16 & -56.788 & 1.31E-11 & -13.27 & 2.20E-16 \\
    1.5   & -45.062 & 2.20E-16 & -65.491 & 2.20E-16 & -56.619 & 2.20E-16 & -10.863 & 2.20E-16 \\
    2     & -44.008 & 2.20E-16 & -66.428 & 2.20E-16 & -56.152 & 2.20E-16 & -26.415 & 2.20E-16 \\
    \hline
    \end{tabular}%
		\end{center}
			{\footnotesize $H_0:\bar{J}^L-\bar{J}^H=0$ and $H_1:\bar{J}^L-\bar{J}^H \neq 0$. Equal sample sizes.} 
  \label{tab:welch}%
\end{table}%

\emph{Irrespective of asset correlations, the difference between $\bar{J}^H$ and $\bar{J}^L$ diminishes with the number of firms.}\\

For larger market sizes, $N$=500 and 1000 firms, the difference between the high/low correlation divergences diminishes, and even seems to disappear in Figure 1. Table 1 shows that for $ln(D_i/V_i)$=150\%, and $N$=1000 firms, $\bar{J}^L$= 0.1772 whilst $\bar{J}^H$= 0.1833. However, this difference is statistically significant at a 1\% confidence level, as evidenced by the Welch test of equality of means shown in Table 2. The Welch test is also significant at 1\% confidence level for all other market sizes and leverage values. It should be emphasized that we are not suggesting that the multi-factor and single-factor default probabilities will converge or even equalize when the number of firms increases. As long as $\bar{J}\neq 0$, the two probabilities of default will diverge. Tables 1 and 2 support this finding even in large markets.  \\

\emph{The divergence between the two models increases with market size.}\\ 

This can be seen in Figure 1 by comparing the value of average divergence when $N=10$ and $N=1000$. For $N=10$, high correlation and leverage equal to 10\%, $\bar{J}^H$= 0.5669, against $\bar{J}^H$= 0.8201 for $N=1000$, all else equal. When correlations are low and leverage equal to 10\%, $\bar{J}^L$= 0.3400 for $N=10$ and $\bar{J}^L$= 0.7929 for $N=1000$. The level of firm indebtedness does not alter this result. For $N=10$ and leverage at 200\%, both high and low correlation divergences become closer to zero, $\bar{J}^L$=0.0575 and $\bar{J}^H$=0.0992. For $N=1000$, and leverage at 200\%, high and low correlation divergences become closer to each other in value, $\bar{J}^L$=0.1425 and $\bar{J}^L$=0.1547, but they are still markedly higher than the respective divergences for $N=10$.

\section{Conclusion\label{conc}}

This paper builds a multi-factor model where individual corporate asset value follows a geometric Brownian motion which is correlated with that of other firms. We run simulations of the multi-factor default probability using randomly generated low and high correlation matrices. We also run simulations of the single factor default probability, i.e., the default probability of a corporate with uncorrelated asset value. For each pair of default probabilities we calculate its Jeffreys-Kullback-Leibler divergence. Overall, we find that the discrepancy between the two models is exacerbated in highly correlated markets and when firm's indebtedness is between 10 to 100 percent. Our findings have implications for financial regulation. The Basel II and III Capital Adequacy Accords stipulate that capital provisions should be calculated in accordance with a single-factor, single-firm structural model \citep{BIS2006,bis2011}. Our paper suggests that in periods of financial instability, when asset volatility and correlations increase, one of the models may misreport default risk and thus lead to inadequate capital provisions.

\bibliography{structural,entropy,correlation,regulation}
\bibliographystyle{apecon}

\appendix
\section{Appendix}

Our proof hinges on the Brownian motion $W_{i,t}$ being Normally distributed with mean 0 and variance 1, and independent from $W_{j,t}$. It also relies on the $1/2$-self-similarity property of Brownian motions, i.e., $t^{1/2}W_{j,t}{\buildrel d \over =} W_{j,t}$. Before we proceed to the proof of Proposition \ref{distribution}, a couple of definitions and a theorem will be presented. A good introduction on self-similar stochastic processes is \citet{maejima02}.

\subsection{Self-similarity of Brownian motions}

A stochastic process $\{W_{t}, t\geq 0\}$ is said to have independent increments,
if for any $m\geq 1$ and for any partition $0\leq t_0 < t_1 <..< t_m$, $W_{t_1}- 
W_{t_0},.., W_{t_m}-W_{t_{_{m-1}}}$ are independent. A formal definition of a Brownian motion will be useful in the proof of Theorem \ref{ss2}.  

\begin{definition}\label{def_bm} If a stochastic process $\{W_{t}, t\geq 0\}$ satisfies
\begin{enumerate}

\item[(i)]  $W_0 = 0$ a.s.,
\item[(ii)] it has independent and stationary increments,
\item[(iii)] for each $t>0$, $W_t$ has a Gaussian distribution with mean zero and variance $t$, and
\item[(iv)] its sample paths are continuous a.s.,
\end{enumerate}

then it is called (standard) Brownian motion.
\end{definition}

\begin{definition} A stochastic process $\{W_{t}, t\geq 0\}$ is said to be self-similar
if for any $a>0$, there exists $b>0$ such that\\
\begin{equation} 
W_{at}\buildrel d \over = b W_{t}\label{ss1}
\end{equation}
\end{definition}

where ``${\buildrel d \over =}$'' means equality in distribution. \\

The following theorem was proved by \citet{lam62} and shows that a self-similar stochastic process $W_{at}$ is equal in distribution to the stochastic process $a^{H}W_{t}$. In many texts on the topic, a self-similar process is defined by this property.  
 
\begin{theorem}(\citet{lam62}) If $\{W_{t}, t\geq 0\}$ is nontrivial, stochastically continuous at t=0 and self-similar, then there exists a unique exponent $H\geq 0$ such that b in (\ref{ss1}) can be expressed as $b = aH$. Moreover, $H>0$ if and only if $W_0 = 0$ a.s.
\end{theorem}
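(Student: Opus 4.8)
The plan is to show that self-similarity forces the scaling factor, which a priori could depend on $a$, to be a power $a^{H}$, and then to read off the sign of the exponent and the behaviour of $W_0$ from the limit $a \downarrow 0$; throughout write $b = b(a)$ to emphasise the dependence in (\ref{ss1}). I would first establish that $b(a)$ is \emph{uniquely} determined by $a$. Suppose both $b_1, b_2 > 0$ satisfy $W_{at} \buildrel d \over = b_i W_t$; then $b_1 W_t \buildrel d \over = b_2 W_t$ for every $t > 0$, so writing $\phi_t$ for the characteristic function of $W_t$ and $r = b_2/b_1$ we get $\phi_t(\theta) = \phi_t(r\theta)$ for all real $\theta$. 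Iterating gives $\phi_t(\theta) = \phi_t(r^{n}\theta)$ for all $n \in \mathbb{Z}$, and, since the process is nontrivial, there is a $t$ with $W_t$ nondegenerate; letting $r^{-n}\theta \to 0$ and using continuity of $\phi_t$ with $\phi_t(0)=1$ would force $\phi_t \equiv 1$, a contradiction, unless $r = 1$. Hence $b(\cdot)$ is a well-defined function $(0,\infty) \to (0,\infty)$.

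Next I would derive a multiplicative functional equation. For $a_1, a_2 > 0$, applying (\ref{ss1}) twice gives $W_{a_1 a_2 t} \buildrel d \over = b(a_1) W_{a_2 t} \buildrel d \over = b(a_1) b(a_2) W_t$, while a single application gives $W_{a_1 a_2 t} \buildrel d \over = b(a_1 a_2) W_t$. By the uniqueness just established, $b(a_1 a_2) = b(a_1) b(a_2)$. Setting $g(u) = \log b(e^{u})$ converts this into the additive Cauchy equation $g(u+v) = g(u) + g(v)$ on $\mathbb{R}$.

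The key analytic step is to rule out the pathological (nonmeasurable) solutions of Cauchy's equation by extracting regularity from stochastic continuity, and I expect this to be the main obstacle, since everything else is algebraic. Fixing $t=1$ gives $W_a \buildrel d \over = b(a) W_1$, so for any bounded continuous $f$ we have $E[f(W_a)] = E[f(b(a) W_1)]$. Because $W_1$ is nondegenerate one can choose $f$ so that $c \mapsto E[f(c W_1)]$ is injective; stochastic continuity makes $a \mapsto E[f(W_a)]$ measurable, and inverting this injective map yields measurability of $b$, hence of $g$. A measurable solution of the additive Cauchy equation is linear, so $g(u) = H u$ for a unique constant $H$, i.e. $b(a) = a^{H}$.

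Finally I would pin down the sign of $H$ together with the dichotomy for $W_0$ by letting $a \downarrow 0$. Stochastic continuity at $0$ gives $W_{at} \to W_0$ in probability, while $W_{at} \buildrel d \over = a^{H} W_t$. If $H < 0$ then $a^{H} \to \infty$ and $a^{H} W_t$ cannot converge in distribution for nondegenerate $W_t$, a contradiction, so $H \geq 0$. If $H > 0$ then $a^{H} W_t \to 0$ in distribution, which forces $W_0 = 0$ a.s. Conversely, if $H = 0$ then $b \equiv 1$, so all marginals coincide, $W_t \buildrel d \over = W_1$ for every $t > 0$; were $W_0 = 0$ a.s., stochastic continuity would give $W_1 \buildrel d \over = 0$ and hence triviality, contrary to hypothesis. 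Thus $H > 0$ if and only if $W_0 = 0$ a.s., completing the proof.
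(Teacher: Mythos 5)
First, note that the paper does not actually prove this statement: it is quoted from \citet{lam62} and invoked as a black box, so there is no internal proof to compare yours against. Your argument follows the standard route to Lamperti's theorem --- uniqueness of the scaling constant via characteristic functions and nondegeneracy, the multiplicative Cauchy equation $b(a_1a_2)=b(a_1)b(a_2)$, a regularity argument to exclude pathological solutions, and the limit $a\downarrow 0$ for the sign of $H$ and the dichotomy for $W_0$ --- and the first, second and fourth steps are correct as written (you also correctly read the paper's ``$b=aH$'' as the intended $b=a^{H}$).

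The genuine gap is in the regularity step, which you rightly flag as the main obstacle but do not actually close. You assert that ``stochastic continuity makes $a \mapsto E[f(W_a)]$ measurable.'' The hypothesis is stochastic continuity \emph{at $t=0$ only}; for a general process the map $t \mapsto \mathrm{law}(W_t)$ need not be measurable, and continuity in probability at the single point $0$ gives no control on $(0,\infty)$. Worse, since $E[f(W_a)] = \psi(b(a))$ with $\psi(c)=E[f(cW_1)]$, measurability of $a\mapsto E[f(W_a)]$ is \emph{equivalent} to measurability of $b$, so the step is circular as stated. The standard repair uses only the hypothesis you actually have, and is a variant of the limit you already take in your last paragraph: as $a\downarrow 0$, $W_a \to W_0$ in probability, hence the family $b(a)W_1 \buildrel d \over = W_a$ is tight; since $W_1$ is nondegenerate (so $P(|W_1|>\delta)>0$ for some $\delta>0$), tightness forces $\limsup_{a\downarrow 0} b(a)<\infty$, i.e.\ $b$ is bounded on some interval $(0,\delta_0]$. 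A positive solution of $b(a_1a_2)=b(a_1)b(a_2)$ that is bounded on a set of positive Lebesgue measure is necessarily of the form $a^{H}$ (equivalently, an additive solution of Cauchy's equation bounded above on a set of positive measure is linear). With that substitution, your concluding paragraph on $H\geq 0$ and the equivalence $H>0 \Leftrightarrow W_0=0$ a.s.\ goes through verbatim and the proof is complete.
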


The following theorem will be used in our proof of Proposition \ref{distribution}.

\begin{theorem} A Brownian motion $\{W_{t}, t\geq 0\}$ is 1/2-self-similar.\label{ss2}\end{theorem}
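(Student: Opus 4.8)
The plan is to show directly that $1/2$ is the self-similarity exponent, i.e. that for every $a>0$ one has $W_{at}\buildrel d \over = a^{1/2}W_t$ as processes; by the preceding theorem of \citet{lam62} the exponent so obtained is the unique $H\geq 0$ with $b=a^{H}$, so exhibiting $b=a^{1/2}$ identifies $H=1/2$. Fix $a>0$ and introduce the rescaled process $\tilde{W}_t := a^{-1/2}W_{at}$, $t\geq 0$. The strategy is to verify that $\tilde{W}$ satisfies conditions (i)--(iv) of Definition \ref{def_bm}, hence is itself a standard Brownian motion; since the law of a standard Brownian motion is uniquely determined by those four conditions, this yields $\tilde{W}\buildrel d \over = W$, and rearranging gives $W_{at}=a^{1/2}\tilde{W}_t\buildrel d \over = a^{1/2}W_t$.

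First I would dispatch the easy conditions. Condition (i) is immediate: $\tilde{W}_0=a^{-1/2}W_0=0$ a.s. For (ii), an increment $\tilde{W}_t-\tilde{W}_s=a^{-1/2}(W_{at}-W_{as})$ is a fixed deterministic multiple of the increment of $W$ over the interval $[as,at]$; independence of increments of $\tilde{W}$ is inherited from that of $W$, and stationarity follows because the increment of $W$ over $[as,at]$ depends only on the length $a(t-s)$, so the increment of $\tilde{W}$ depends only on $t-s$. Condition (iv), almost sure path continuity, is preserved because $t\mapsto a^{-1/2}W_{at}$ is the composition of the continuous path $s\mapsto W_s$ with the continuous time change $t\mapsto at$ and a constant scaling.

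The one computation carrying the content of the result is condition (iii). For each $t>0$, $\tilde{W}_t=a^{-1/2}W_{at}$ is Gaussian with mean $a^{-1/2}E[W_{at}]=0$ and variance
\[
\mathrm{Var}(\tilde{W}_t)=a^{-1}\,\mathrm{Var}(W_{at})=a^{-1}\cdot(at)=t,
\]
using that $\mathrm{Var}(W_{at})=at$ by property (iii) for $W$. This is exactly where the exponent $1/2$ is forced: the space-rescaling factor $a^{-1/2}$ is the unique constant that cancels the factor $a$ produced by the time change $t\mapsto at$ and restores unit variance per unit time. I expect this to be the crux, not because it is technically hard, but because it is the only place where the specific value $H=1/2$ enters; the remaining properties would hold for any power of $a$.

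The point requiring genuine care, and the natural candidate for the main obstacle, is the appeal to uniqueness: verifying (i)--(iv) shows $\tilde{W}$ has the defining properties of Brownian motion, but concluding $\tilde{W}\buildrel d \over = W$ requires that these properties pin down all finite-dimensional distributions. Conditions (i)--(iii) already force $(\tilde{W}_{t_1},\dots,\tilde{W}_{t_n})$ to be mean-zero Gaussian with $\mathrm{Cov}(\tilde{W}_s,\tilde{W}_t)=\min(s,t)$, coinciding with that of $W$, so I would either cite this characterisation or bypass the four-property route entirely and compute directly that $W_{at}$ and $a^{1/2}W_t$ are both centred Gaussian processes sharing the covariance $\mathrm{Cov}(W_{as},W_{at})=\min(as,at)=a\min(s,t)$, which gives equality in distribution at once.
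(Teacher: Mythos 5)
Your proof is correct and follows essentially the same route as the paper's: rescale to $a^{-1/2}W_{at}$, check conditions (i)--(iv) of Definition \ref{def_bm}, with the only substantive computation being the variance $a^{-1}\mathrm{Var}(W_{at})=a^{-1}(at)=t$. You additionally make explicit why the four conditions pin down the finite-dimensional distributions, a point the paper leaves implicit, but the argument is the same.
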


\begin{proof} It is sufficient to show that for every $a > 0$, $W_{at}$ is also Brownian
motion as defined in \ref{def_bm} above.\\

Conditions (i), (ii) and (iv) are trivially fulfilled. Regarding (iii), Gaussianity and mean-zero property also follow from the properties of $W_{t}$. To obtain the variance, consider the stochastic process ${a^{-1/2}W_{at}, t\geq 0}$. Its variance is 

\begin{equation}
E[(a^{-1/2}W_{at})^{2}] = a^{-1}E[W_{at}^{2}]= a^{-1}Var[W(at)]=a^{-1}(at)=t
\end{equation}

Thus $W_{at}$ is a Brownian motion.
\end{proof}

\subsection{Proof of Proposition \ref{distribution}}

The proof that (\ref{solsystem}) is the solution of (\ref{gbm}) can be found in \citet{etheridge01} or \citet{shreve2}, amongst others. The distributional properties of $V_{i,t}$ are not proved in the literature, due to it being the building block of option-pricing models, rather than a stand-alone model as in this paper.\\
 
\begin{proof}
Let $X_t= \exp\{(\mu_i-\frac{1}{2} \sum_{j=1}^{N} \sigma_{i,j}^2)t+\sum_{j=1}^{N} \sigma_{i,j}W_{t,j}\}$. Then $V_{i,t}=V_{i,0}X_{i,t}$, where $V_{i,0}$ is a constant. The {\bf mean of $X_t$} is

\begin{equation}
E[X_t]=\exp\left\{(\mu_i-\frac{1}{2} \sum_{j=1}^{N} \sigma_{i,j}^2)t\right\} E\left[\exp\left\{\sum_{j=1}^{N} \sigma_{i,j}W_{j,t}\right\}\right]\label{e1}
\end{equation}

since the terms in the first exponential are non-random. Given that the $m$ Brownian motions $W_{1,t},...,W_{m,t}$ are independent, the term in the expectation function can be written as 

\begin{equation}
E\left[\exp\left\{\sum_{j=1}^{m} \sigma_{i,j}W_{j,t}\right\}\right]= \prod_{j=1}^{m}E\left[\exp \left\{\sigma_{i,j}W_{j,t}\right\}\right]
\end{equation}

From the $1/2$-self-similarity property,

\begin{equation}
\prod_{j=1}^{m}E\left[\exp \left\{\sigma_{i,j}W_{j,t}\right\}\right]=\prod_{j=1}^{m}E\left[\exp \left\{\sigma_{i,j}t^{1/2}W_{t,j}\right\}\right]
\end{equation}
 
Since $W_{j,t}$ is N(0,1), $E[e^{a W_{j,t}}]=e^{a^2/2}$, where $a$ is a real constant. This implies that

\begin{equation}
\prod_{j=1}^{m}E\left[\exp \left\{\sigma_{i,j}t^{1/2}W_{j,t}\right\}\right]=
\prod_{j=1}^{m}\exp \left\{\frac{1}{2}\sigma_{i,j}^2 t \right\}\label{e2}
\end{equation}

Substituting in (\ref{e2}) in (\ref{e1}) yields

\begin{equation}
E[X_t]=\exp\{\mu_i t\} 
\end{equation}\vspace{5mm}

The {\bf variance of $X_t$}, $Var(X_t)=E[X_t^2-(E(X_t)^2)]$, can be simplified to 

\begin{equation}
Var[X_t]=\exp\left\{2\mu_i t-\sum_{j=1}^{m} \sigma_{i,j}^2 t\right\}E\left[\exp\left\{2\sum_{j=1}^{m} \sigma_{i,j}W_{j,t}\right\}\right]-\exp\{2\mu_i t\}
\end{equation}

From the 1/2-self-similarity of Brownian motion

\begin{equation}
E\left[\exp\left\{2\sum_{j=1}^{m} \sigma_{i,j}W_{j,t}\right\}\right]=\prod_{j=1}^{m}E\left[\exp \left\{2\sigma_{i,j}t^{1/2}W_{j,t}\right\}\right]=
\prod_{j=1}^{m}\exp \left\{2\sigma_{i,j}^2 t \right\}\label{v2}
\end{equation}

Substituting in above 

\begin{equation}
Var[X_t]=\exp\left\{2\mu_i t-\sum_{j=1}^{N} \sigma_{i,j}^2 t+2\sum_{j=1}^{N}\sigma_{i,j}^2 t \right\}-\exp\{2\mu_i t\}
\end{equation}

From which,

\begin{equation}
Var[X_t]=\exp\{2\mu_i t\}\left(\exp\left\{\sum_{j=1}^{m}\sigma_{i,j}^2 t\right\}-1\right)
\end{equation}

$V_{i,t}= V_{i,0}X_t$. Since $V_{i,0}$ is a constant, the mean and variance of $V_{i,t}$ follow trivially. \\
\end{proof}

\end{document}